\newtheorem{theorem}{Theorem}
\newtheorem*{fact*}{Fact}
\renewenvironment{proof}{\noindent{\bf Proof:}}{\hspace*{\fill}\rule{6pt}{6pt}\bigskip}
\definecolor{okabe1}{HTML}{000000}
\definecolor{okabe2}{HTML}{E69F00}
\definecolor{okabe3}{HTML}{56B4E9}
\definecolor{okabe4}{HTML}{009E73}
\definecolor{okabe5}{HTML}{F0E442}
\definecolor{okabe6}{HTML}{0072B2}
\definecolor{okabe7}{HTML}{D55E00}
\definecolor{okabe8}{HTML}{CC79A7}
\renewcommand{\emph}[1]{\textit{\textbf{#1}}}
\let\epsilon\varepsilon
\date{}
\title{Making Quickhull More Like Quicksort:\\
A Simple Randomized Output-Sensitive Convex Hull Algorithm}
\author{Michael T. Goodrich \\ University of California, Irvine, USA \and
Ryuto Kitagawa \\ University of California, Irvine, USA}
\begin{document}

\setcounter{page}{0}
\maketitle
\begin{abstract}
In this paper, we present \emph{Ray-shooting Quickhull}, which is a simple,
randomized, output-sensitive version of the Quickhull algorithm for
constructing the convex hull of a set of $n$ points in the plane.
We show that the randomized Ray-shooting Quickhull algorithm
runs in $O(n\log h)$ expected time, where $h$
is the number of points on the boundary of the convex hull.
Keeping with the spirit of the original Quickhull algorithm,
our algorithm is quite simple and is, in fact,
closer in spirit to 
the well-known randomized Quicksort algorithm.
Unlike the original Quickhull algorithm, however,
which can run in $\Theta(n^2)$ time for some input distributions,
the expected performance bounds for the randomized Ray-shooting Quickhull
algorithm match or improve
the performance bounds of more complicated algorithms. 
Importantly, the expectation in our output-sensitive 
performance bound does not depend on
assumptions about the distribution of input points.
Still, we show that, like the deterministic Quickhull algorithm,
our randomized Ray-shooting Quickhull algorithm
runs in $O(n)$ expected time for $n$ points chosen uniformly at random from
a bounded convex region.
We also provide experimental evidence that the randomized Ray-shooting
Quickhull algorithm is on par or faster than deterministic Quickhull
in practice, depending on the input distribution.
\end{abstract}
\thispagestyle{empty}

\clearpage
\section{Introduction}
The convex hull problem is arguably the most-studied problem in computational
geometry; e.g., see Seidel~\cite{seidel2017convex}.
In the two-dimensional version of this 
problem, one is given a set, $S$, of $n$ points in the plane and asked to
output a representation of the smallest convex polygon that contains
the points in $S$.
(See Figure~\ref{fig:convex}.)
It is easy to see that the output size, $h$, can range from $3$ (when the
convex hull is a triangle) to $n$ (when all the points of $S$ 
are on the boundary of the convex hull).
The asymptotically fastest convex hull algorithms
are \emph{output sensitive}, meaning that their running time depends
on both $n$ and $h$, with the best such algorithms running in 
$O(n\log h)$ time; e.g., see Kirkpatrick and Seidel~\cite{kirkpatrick}
and Chan~\cite{chan}.
Unfortunately, although these output-sensitive algorithms are asymptotically
optimal, they are somewhat complicated and 
tend to be inefficient in practice; e.g., see
McQueen and Toussaint~\cite{mcqueen}.
Thus, it would be desirable to have a simple, practical, output-sensitive
convex hull algorithm.

\begin{figure}[hbt]
\centering
\includegraphics[width=4.2in, trim = 1.2in 2.5in 7.9in 1.6in, clip]{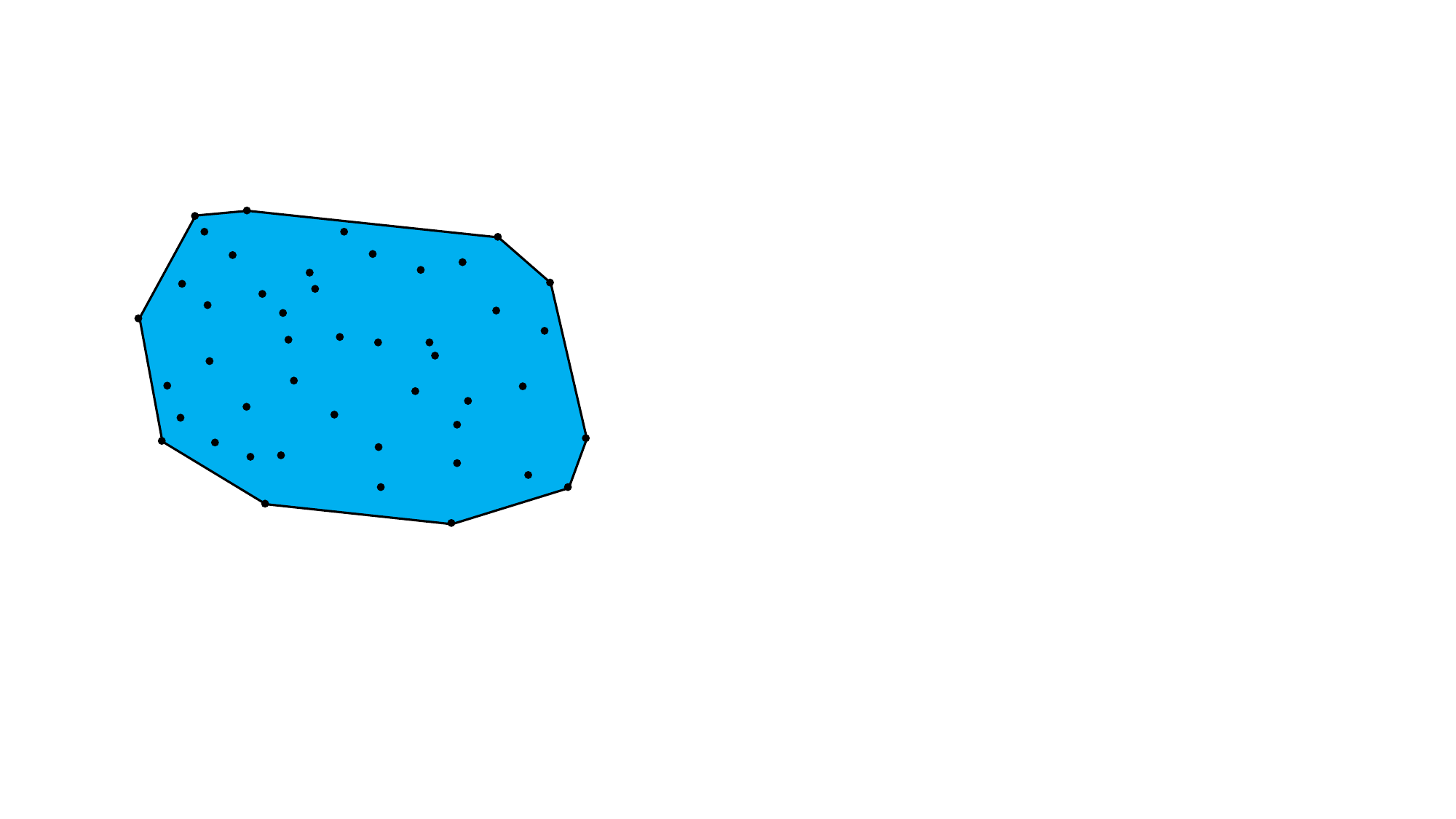}
\caption{\label{fig:convex} A two-dimensional convex hull.}
\end{figure}

\subsection{Related Prior Work}
Quickhull is a simple algorithm for finding
the convex hull of a set of $n$ points in the plane,
which has been taught many times
to undergraduates; see, e.g., Mount~\cite{mount},
Preparata and Shamos~\cite{preparata}, and O'Rourke~\cite{orourke}.
The Quickhull algorithm was first published (apparently independently)
in the late 1970s
by Eddy~\cite{eddy}, Bykat~\cite{bykat}, and Green and Silverman~\cite{green},
but it wasn't given the name
``Quickhull'' until years later; see, e.g.,
Preparata and Shamos~\cite{preparata}, 
Greenfield~\cite{greenfield} and Barber, Dobkin, and Huhdanpaa~\cite{barber}.
According to Google Scholar, the Quickhull 
paper by Barber {\it et al.}~\cite{barber}
has been cited over 7,000 times.

The Quickhull algorithm is deterministic and runs in $O(n^2)$ time in the worst
case, but performs well in practice for a variety of input
distributions; see, e.g., Gamby and Katajainen~\cite{gamby2018convex},
Mitura, {\v{S}}imecek, and Kotenkov~\cite{mitura}, and
Overmars and van~Leeuwen~\cite{overmars}.
Intuitively, unlike the output-sensitive
convex hull algorithms of Kirkpatrick and Seidel~\cite{kirkpatrick}
and Chan~\cite{chan},
Quickhull often has fast running times due to its ability to prune many input
points early in the recursive calls of its divide-and-conquer structure.
Still, its deterministic nature 
and worst-case inefficient performance is not ideal.
Indeed, in his widely-used lecture notes, Mount~\cite{mount} states
that ``unlike QuickSort, there is no obvious way to convert [Quickhull] 
into a randomized algorithm with $O(n \log n)$ expected running time.''
The goal of this paper, therefore, is to provide a simple randomized version
of the Quickhull algorithm suitable for teaching to undergraduates that
has an efficient expected running time and is more like the
randomized Quicksort algorithm 
(e.g., see~\cite{clrs,goodrich,iliopoulos,tardos}).

There has, in fact, been prior work on designing randomized 
versions of the Quickhull algorithm, 
but none of the published algorithms are simple.
Wenger~\cite{wenger} presents a randomized Quickhull algorithm, which is based
on pairing input points and pruning them based on the the slopes of the lines
determined by the pairs. Wenger's algorithm is not simple, however, and he admits
that its running time has large constant factors.
Bhattacharya and Sen~\cite{bhattacharya} improve the constant factors
for this approach, but their algorithm is still not as simple as the original
Quickhull algorithm and the constant factors in their running time analysis
are not small.
Chan, Snoeyink, and Yap~\cite{chan1995output} also provide a randomized 
convex hull algorithm based on pairing points and pruning points based on slopes,
but it too is not simple.
In addition, 
Br{\"o}nnimann, Iacono, Katajainen, Morin, Morrison, and 
Toussaint~\cite{bronnimann2002place}
provide an in-place version of this approach.
In contrast, a different output-sensitive 
convex hull algorithm by Chan~\cite{chan} is simpler,
but it is based on using several algorithmic ``tricks''
and it cannot be considered a version of Quickhull.

\subsection{Our Results}
In this paper, we present a
simple randomized \emph{Ray-shooting Quickhull} algorithm
for constructing the convex hull of a set of $n$ points in the plane.
We show that our randomized Ray-shooting Quickhull algorithm
runs in $O(n\log h)$ expected time, where $h$
is the number of points on the boundary of the convex hull.
Moreover, we show that the constant factor in this expected running time is small.

Our algorithm is closer in spirit to 
the well-known randomized Quicksort algorithm 
(e.g., see~\cite{clrs,goodrich,iliopoulos,tardos}),
in that it involves repeatedly picking a ``pivot'' point at random
and splitting subproblems based on how this pivot divides the points.
A crucial component in our algorithm is that this splitting is done
by a ray-shooting operation, where we shoot a ray from the pivot to 
determine where the ray would cross the boundary of the convex hull.
We provide a novel, simple ray-shooting algorithm to perform this step
and we show that it performs an expected number of orientation tests
that is at most $2n$.

We also provide an explicit construction of a point set
where each point can be represented in floating point using $O(\log n)$
bits that shows that the original Quickhull algorithm can require
$\Omega(nh)$ time.
Unlike the original Quickhull algorithm, 
which can run in $\Theta(n^2)$ time for some input distributions,
the expected running time for the randomized Ray-shooting Quickhull
algorithm matches or improves
the expected running time of more complicated algorithms. 
For example, we show that, like the deterministic Quickhull algorithm,
our randomized Ray-shooting Quickhull algorithm
runs in $O(n)$ expected time for $n$ points chosen uniformly at random from
a bounded convex region.
We also provide an experimental analysis as well.

\section{Quickhull}
In this section, we review the Quickhull algorithm.
Suppose that we are given a set, $S$, of $n$ points in the plane.
The Quickhull algorithm begins by finding a point, $p\in S$, with minimum
$x$-coordinate, and a point, $r\in S$, with maximum $x$-coordinate.
Clearly, $p$ and $r$ are on the convex hull of $S$.
Quickhull is a divide-and-conquer algorithm, where at each call we
are given a line segment, $\overline{pr}$, where $p$ and $r$ are on the convex
hull, and a subset, $S'$, of $S$ of points on one side of $\overline{pr}$.
Initially, there are two subproblems, one for the points above the initial
$\overline{pr}$ segment and one for the points below $\overline{pr}$.
Next, for each recursive call, we have a set of points, $S'\subseteq S$, inside
a triangle with base $\overline{pr}$, for which
Quickhull determines the point, $q$ in $S'$, that is farthest 
from the segment $\overline{pr}$. 
Note that $q$ must be on the boundary of the convex hull. 
Quickhull then prunes away any points of $S'$ inside the triangle
$(p,q,r)$, since they cannot belong to the boundary of the convex hull
of $S$.
Next, if we view $\overline{pr}$ as being horizontal, then 
we partition the remaining points of $S'$ into those that 
are above $\overline{pq}$ and $\overline{qr}$ and in bounding triangles
defined by the tangents, respectively, and we recursively
solve the problem for each of these subsets if they are nonempty.
See Figure~\ref{fig:quickhull}.

\begin{figure}[hbt]
\centering
\includegraphics[width=6in, trim = 1.15in 1.75in 1in 1.5in, clip]{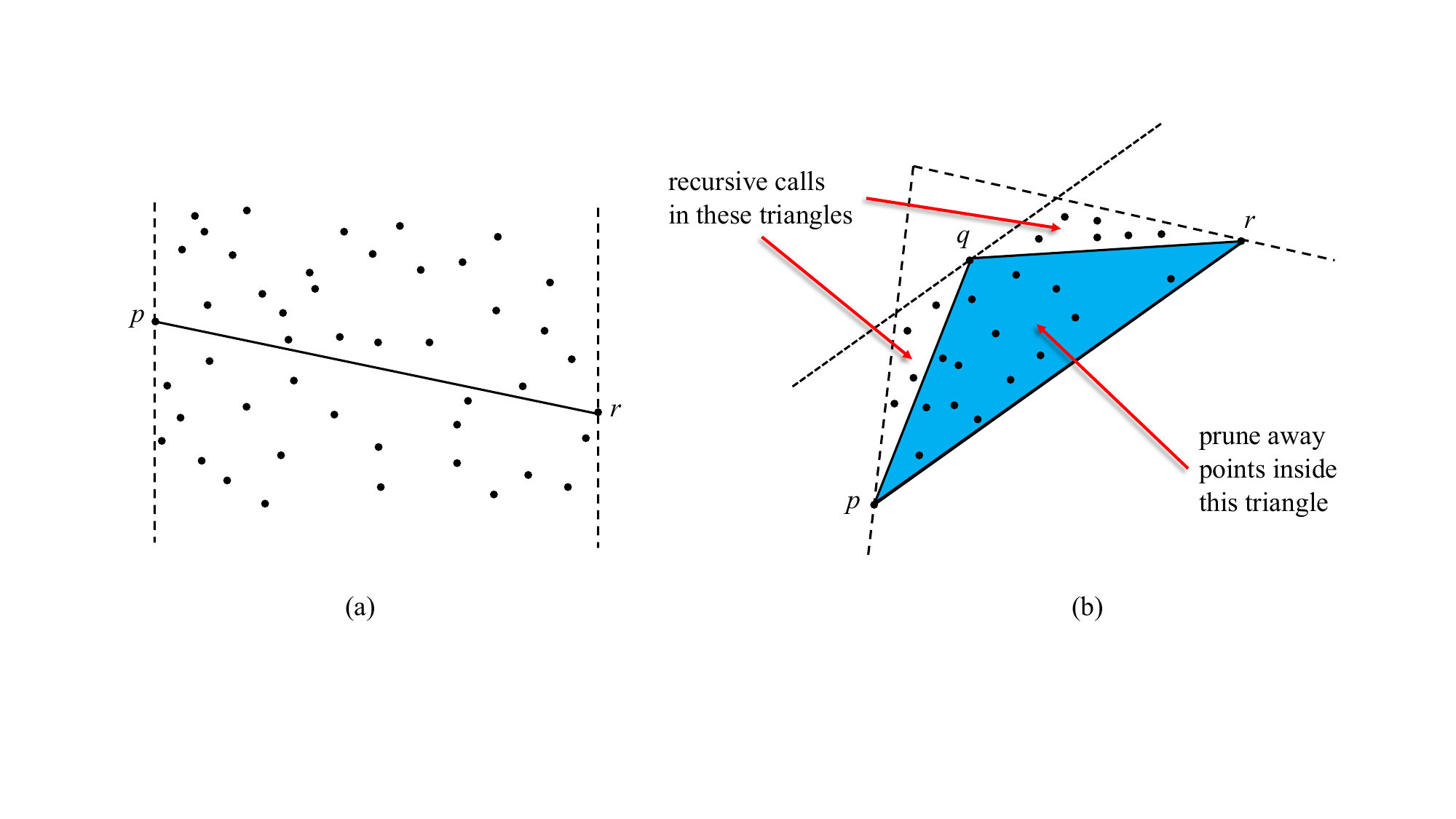}
\caption{\label{fig:quickhull} Illustrating the Quickhull algorithm.
(a) the initialization step; (b) a recursive call.}
\end{figure}

Quickhull is clearly deterministic and
it is easy to see that the worst-case running time of Quickhull 
is $O(n^2)$, much like the worst-case running time 
of Quicksort
(e.g., see~\cite{ahu,clrs,goodrich,hoare1962quicksort,iliopoulos,tardos}).
Indeed, it is also easy to show that the
worst-case running time of Quickhull is $O(nh)$, 
where $h$ is the number of points in the convex hull, since each
recursive call runs in $O(n)$ time and is guaranteed to find a distinct
point on the convex hull.
Furthermore, there are inputs that require Quickhull to have this running time,
as we show in Section~\ref{sec:experiments}.

\section{The Randomized Ray-Shooting Quickhull Algorithm}
In this section, we present the randomized Ray-shooting Quickhull 
algorithm.
Suppose we are given a set, $S$, of $n$ points in the plane.
The randomized Ray-shooting Quickhull algorithm begins as in the Quickhull
algorithm, by finding a point, $p\in S$, with minimum
$x$-coordinate, and a point, $r\in S$, with maximum $x$-coordinate.
Clearly, $p$ and $r$ are on the convex hull of $S$.
Randomized Ray-shooting Quickhull is 
a divide-and-conquer algorithm, where at each call we
are given a line segment, $\overline{pr}$, where $p$ and $r$ are on the convex
hull, and a subset, $S'$, of $S$ of points on one side of $\overline{pr}$.
Initially, there are two subproblems, one for the points above the initial
$\overline{pr}$ segment and one for the points below $\overline{pr}$.
Next, for each recursive call, we have a set of points, $S'\subseteq S$, 
inside a triangle with base $\overline{pr}$, for which
we choose a point, $q$ in $S'$, uniformly at random.
Note that $q$ is not necessarily on the boundary of the convex hull, 
but we nevertheless view $q$ as a ``pivot,'' as 
in the well-known randomized Quicksort algorithm
(e.g., see~\cite{clrs,goodrich,iliopoulos,tardos}).
We then perform a \emph{ray-shooting} query, $\vec{R}$, 
for $S'$ from the pivot, $q$, directed
in the normal (perpendicular) direction from $\overline{pr}$. 
This ray-shooting query determines 
a pair of points, $(s,t)$, such that $\overline{st}$ is an edge of the convex
hull of $S'$ that is intersected by $\vec{R}$.
This edge is often called a \emph{bridge}~\cite{kirkpatrick}.
If $q$ is itself on the convex hull and would have been the point
chosen in the original Quickhull algorithm, then 
$s=t=q$, and we view $\overline{st}$ as a zero-length line segment with
slope equal to the slope of $\overline{pr}$.
We then prune away any points of $S'$ inside the polygon,
$(p,s,t,r)$, since these points 
cannot belong to the boundary of the convex hull of $S$.
Next, if we view $\overline{pr}$ as being horizontal, then 
we partition the remaining points of $S'$ into those that 
are above $\overline{ps}$ and $\overline{tr}$, respectively, and we recursively
solve the problem for each of these subsets if they are nonempty. 
The subproblems have bounding triangles defined by
the original boundaries, the line $\overline{st}$, and the line segments
$\overline{ps}$ and $\overline{tr}$, respectively.
See Figure~\ref{fig:rquickhull}.

\begin{figure}[hbt]
\centering
\includegraphics[width=6in, trim = 1.15in 1.75in 1in 1.5in, clip]{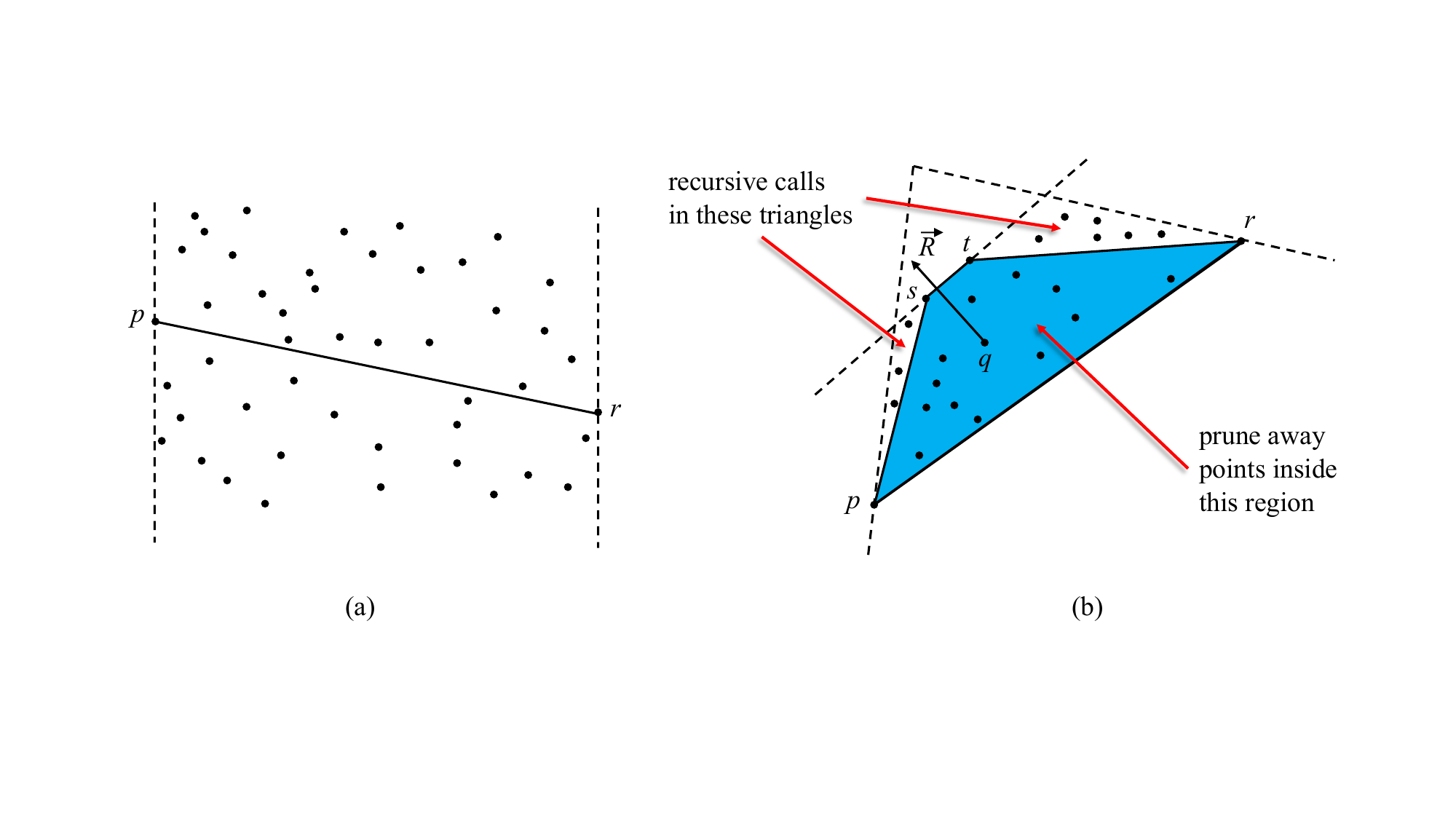}
\caption{\label{fig:rquickhull} Illustrating the 
randomized Ray-shooting Quickhull algorithm. (a) the initialization step;
(b) a recursive call.}
\end{figure}

\subsection{Ray-shooting Queries}
Let us next describe how to efficiently implement the
ray-shooting queries performed in the randomized Ray-shooting 
Quickhull algorithm.
We begin by observing that the ray-shooting query used
in our randomized Ray-shooting Quickhull algorithm can 
be answered in $O(n)$ time
using a bridge-finding algorithm by Kirkpatrick and Seidel~\cite{kirkpatrick},
but their method is fairly complex and does not have a small constant
factor in its running time.
By a point-line duality, the ray-shooting queries performed
in the randomized Ray-shooting Quickhull algorithm can also
be solved by a reduction to two-dimensional linear programming, 
but doing this problem conversion (and its reversal) adds needless 
complications to the algorithm, as well as needing to deal with issues
that arise in two-dimensional linear programming that do not arise
in answering the ray-shooting queries used in our
randomized Ray-shooting Quickhull algorithm.
Thus, for the sake of completeness and simplicity, let us
describe here a direct randomized incremental algorithm 
for answering these types of ray-shooting queries. 
Our algorithm is an adaptation and simplification 
of a randomized algorithm by Seidel~\cite{seidel} 
for two-dimensional linear programming; see also,
e.g., de~Berg {\it et al.}~\cite{deberg}.

Suppose that we are given a set, $S$, of $n$ points and point, $q$.
Without loss of generality, let us assume that base segment,
$\overline{pr}$, is horizontal and the ray, $\vec{R}$, we are shooting from
$q$ is vertical and pointing upward; hence, we are interested in
finding a bridge on the upper hull of $S\cup\{q\}$.
Let $R_l$ denote the halfplane to the left of $\vec{R}$
and $R_r$ denote the halfplane to the right of $\vec{R}$.
Our algorithm is shown in Algorithm~\ref{alg:rayshoot}.
See Figure~\ref{fig:ray}.

\begin{algorithm}[hbt]
\vspace*{6pt}
\textbf{Algorithm} RayShoot($S,q$):
\begin{algorithmic}[1]
\STATE Let $(s,t)\leftarrow (q,q)$ be our initial candidate convex hull
(degenerate) bridge edge.
\STATE Let $(s,t)$ initially define a horizontal line, $\overline{st}$, through $q$.
\STATE Let $R_l$ denote the left halfplane defined by the vertical line through $q$.
\STATE Let $R_r$ denote the right halfplane defined by the vertical line through $q$.
\STATE Let $S_l=S_r=\{q\}$ be the set of points processed so far 
that are respectively in $R_l$ and $R_r$.
\STATE Randomly permute the points in $S=\{p_1,p_2,\ldots,p_n\}$.
\label{permute}
\FOR {$i\leftarrow 1$ \textbf{to} $n$}
\IF {$p_i$ is above the line $st$}
\label{if-state}
\IF {$p_i$ is in $R_l$}
\STATE Find the point, $t'$, in $S_r$
such that $p_i t'$ minimizes the angle with the $x$-axis.
\label{search1}
\STATE Let $(s,t)=(p_i,t')$.
\ELSE
\STATE Find the point, $s'$, in $S_l$
such that $s' p_i$ minimizes the angle with the $x$-axis.
\label{search2}
\STATE Let $(s,t)=(s',p_i)$.
\ENDIF
\ENDIF
\STATE {\bf if} $p_i\in R_l$ {\bf then} add $p_i$ to $S_l$.
\STATE {\bf if} $p_i\in R_r$ {\bf then} add $p_i$ to $S_r$.
\ENDFOR
\end{algorithmic}
\caption{\label{alg:rayshoot} 
Given a set, $S$, of $n$ points in the plane and a point, $q$,
the RayShoot algorithm 
finds the edge of the upper hull of $S\cup\{q\}$
intersected by a vertical ray, $\vec{R}$, from $q$.
If $q$ is on the convex hull of $S\cup\{q\}$, then the algorithm
returns either a degenerate edge, $(q,q)$, or an edge of 
the convex hull that includes $q$.
We describe the algorithm assuming there is a horizontal
base edge and all the points of the input set, $S$, are above it.
}
\end{algorithm}

\begin{figure}[hbt]
\centering
\includegraphics[width=6.2in, trim = 0.9in 3in 1.5in 1.4in, clip]{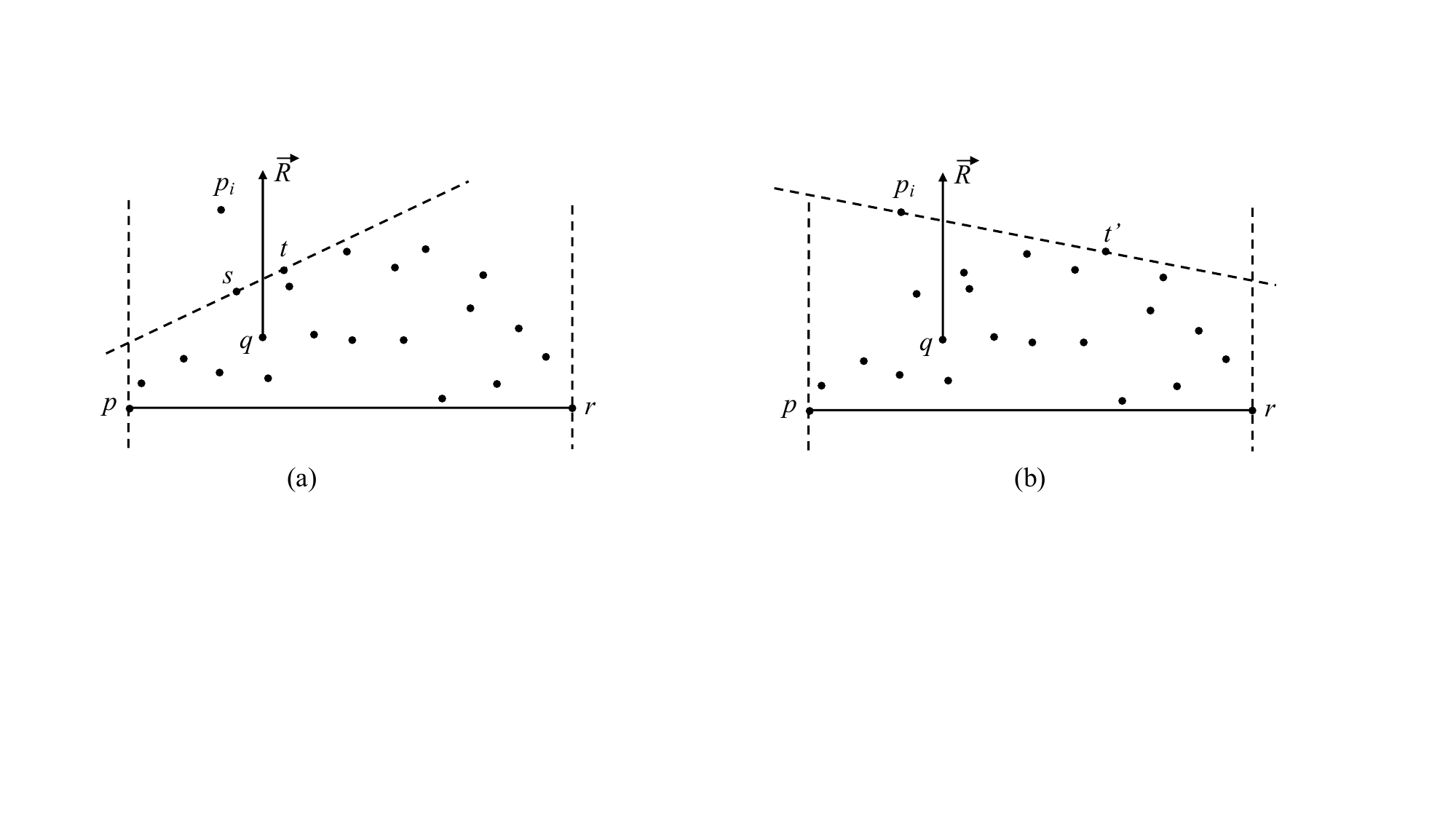}
\vspace*{-8pt}
\caption{\label{fig:ray} Illustrating the 
ray-shooting algorithm. (a) before considering the point $p_i$; (b) after
updating the edge $(s,t)$ to be $(p_i,t')$.}
\vspace*{-26pt}
\end{figure}

\clearpage
\begin{theorem}
Algorithm RayShoot is correct.
\end{theorem}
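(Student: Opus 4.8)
The plan is to prove correctness by maintaining a loop invariant: after the $i$-th iteration of the for-loop, the pair $(s,t)$ stored by the algorithm is exactly the bridge of the set $\{q,p_1,\dots,p_i\}$ with respect to the vertical ray $\vec R$ --- that is, $\overline{st}$ is the edge of the upper hull of $\{q,p_1,\dots,p_i\}$ crossed by $\vec R$, with the understanding that the returned bridge is degenerate $(q,q)$ or an edge incident to $q$ when $q$ itself lies on that upper hull over $x$-coordinate $x_q$. After the final iteration the underlying set is all of $S\cup\{q\}$, so the invariant gives the theorem at once. The base case is immediate, since before the loop $S_l=S_r=\{q\}$ and $(s,t)=(q,q)$ is the (degenerate) bridge of $\{q\}$.

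The first real step is to record a clean algebraic characterization of the bridge that makes the invariant easy to maintain. For a point $a$ with $x$-coordinate at most $x_q$ and a point $b$ with $x$-coordinate at least $x_q$, let $f(a,b)$ denote the height at which the line through $a$ and $b$ crosses the vertical line $x=x_q$. I would show that the bridge is precisely the pair $(s,t)$ with $s\in S_l$, $t\in S_r$ that maximizes $f$. One direction is immediate: since $\overline{st}$ is an edge of the upper hull, its supporting line lies weakly above every input point, so for any left--right pair $(a,b)$ both $a$ and $b$ lie weakly below that supporting line, and by convexity the whole segment $\overline{ab}$ --- in particular its height $f(a,b)$ at $x_q$ --- lies weakly below it, giving $f(a,b)\le f(s,t)$. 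Under general position $(s,t)$ is the unique maximizer.

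With this characterization the inductive step splits on whether $p_i$ lies above the current line $\overline{st}$. If it does not, then $p_i$ lies weakly below the current supporting line, and the same convexity argument shows every pair involving $p_i$ has $f$-value at most the current maximum; the bridge is unchanged and the algorithm correctly leaves $(s,t)$ alone. If $p_i$ lies strictly above $\overline{st}$, then pairing $p_i$ with the old opposite endpoint already beats the old maximum (the two lines meet at that endpoint while $p_i$ sits higher on its side of $x_q$, so they do not cross again before $x_q$), whence the new bridge must involve $p_i$; every pair not involving $p_i$ survives from the previous set and is therefore dominated. When $p_i\in R_l$ this forces $p_i$ to be the left endpoint, so the new bridge is $(p_i,t')$ with $t'=\arg\max_{b\in S_r} f(p_i,b)$, and symmetrically when $p_i\in R_r$. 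Finally I would check that the algorithm's angle test computes this $\arg\max$: because $x_q-x_{p_i}>0$ is a fixed positive factor, $f(p_i,\cdot)$ is an increasing function of the slope of $\overline{p_i b}$, so maximizing $f$ is the same as selecting the line whose acute angle with the $x$-axis is smallest, i.e.\ the upper-tangent line from $p_i$ to $S_r$; the right-endpoint case is identical with the sign of $x_q-x_{p_i}$ reversed.

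The main obstacle is exactly this update step: arguing rigorously that searching only the opposite side and replacing a single endpoint produces the true global bridge of the enlarged set rather than merely a pair that is locally optimal on one side. This is where the supporting-line and convexity domination argument does the work, certifying that no surviving pair can beat the newly formed one. Secondary care is needed for the degeneracies folded into the statement --- namely when $q$ lies on the upper hull, so that $q\in S_l\cap S_r$ may legitimately serve as an endpoint and the returned edge is degenerate or incident to $q$, and when collinearities break general position --- which I would handle uniformly by a standard symbolic-perturbation tie-breaking rule in the orientation and angle comparisons.
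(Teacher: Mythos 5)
Your proof is correct and rests on the same loop invariant as the paper's --- after iteration $i$, the pair $(s,t)$ is the bridge of the upper hull of $\{q,p_1,\dots,p_i\}$ crossed by $\vec R$ --- but you maintain it by a genuinely different route. The paper argues directly geometrically: when $p_i\in R_l$ lies above $\overline{st}$, the new partner $t'$ is the upper-tangent point of $S_r$ seen from $p_i$, so no point of $S_r$ lies above $\overline{p_i t'}$, and since the old $t$ was among the candidates and the old points of $S_l$ were below $\overline{st}$, no point of $S_l$ lies above it either. You instead first prove an algebraic characterization --- the bridge is the left--right pair maximizing the height $f(a,b)$ of the line $ab$ at $x=x_q$ --- and then reduce the update step to the observation that every old pair is dominated by the old maximum, which is in turn beaten by $(p_i,t)$, so the new maximum is $\max_{b\in S_r} f(p_i,b)$. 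This is essentially the linear-programming/duality view that the paper explicitly chose not to invoke, but it buys a real dividend: it cleanly certifies that points of $S_l$ lying far to the \emph{left} of $p_i$ cannot poke above the new line $\overline{p_i t'}$, a step the paper's proof passes over rather quickly (the new line dominates $\overline{st}$ only for $x$ up to their unique crossing point, and one must check that this crossing lies at or to the right of $x_q$ --- which is exactly what comparing $f$-values at $x_q$ delivers). One small slip to fix: maximizing $f(p_i,\cdot)$ over $S_r$ is equivalent to maximizing the \emph{slope} of $\overline{p_i b}$, i.e., taking the upper tangent, not to minimizing the acute angle with the $x$-axis (a line of slope $2$ beats a line of slope $1/2$ for $f$ but makes the larger acute angle). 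This affects only your gloss on the algorithm's informal phrasing of the search step, not the correctness of the argument; the paper's own proof identifies that search with finding the tangent line, which agrees with your $\arg\max_{b} f(p_i,b)$.
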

\begin{proof}
The correctness of RayShoot follows by a simple inductive argument,
where the induction hypothesis is that
the edge, $(s,t)$, is the bridge edge
of the upper hull of $\{q,p_1,p_2,\ldots,p_i\}$ that intersects $\vec{R}$.
This is true initially, for $i=0$, since 
$(s,t)=(q,q)$ and the line $\overline{st}$ is a horizontal line.
For the induction step, $i\ge 1$, suppose the induction hypothesis is true for 
iteration $i-1$.
If the point, $p_i$, is below the line $\overline{st}$, then $p_i$ does
not invalidate the invariant that $(s,t)$ is a bridge edge intersecting
$\vec{R}$.
So suppose $p_i$ is above the line $\overline{st}$. W.l.o.g., suppose
$p_i\in R_l$.
Then we replace $(s,t)$ with the edge $(p_i,t')$, such that
$t'\in S_r$ and the line $\overline{p_it'}$ is a tangent line for
the convex hull of the set
$R_r\cap\{q,p_1,p_2,\ldots,p_{i-1}\}$.
Further, by the induction hypothesis, at the beginning of 
iteration $i$, none of the points
in $R_l\cap\{q,p_1,p_2,\ldots,p_{i-1}\}$ were above the line 
$\overline{st}$. Also, 
note that
$t\in R_r\cap\{q,p_1,p_2,\ldots,p_{i-1}\}$, and $t$ was considered
in the search for $t'$ performed in line~\ref{search1}.
Thus, there can be no point
in $R_l\cap\{q,p_1,p_2,\ldots,p_{i-1}\}$ above the line 
$\overline{p_it'}$.
Therefore, we satisfy the induction invariant for the next iteration,
which establishes the proof.
\end{proof}

\begin{theorem}
\label{thm:time}
Algorithm RayShoot performs at most $2n$
orientation tests in expectation.
\end{theorem}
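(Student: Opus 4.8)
The plan is to split the orientation tests into two groups and bound each separately: the single ``above the line'' test performed once per point in line~\ref{if-state}, and the tests performed inside the angle-minimizing searches of lines~\ref{search1} and~\ref{search2}. The first group is deterministic and contributes exactly $n$ orientation tests, since the loop executes the test in line~\ref{if-state} once for each of the $n$ points. Thus the entire randomized content of the bound reduces to showing that the expected total cost of the searches is at most $n$.

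To analyze the searches I would use a backward analysis on the random permutation chosen in line~\ref{permute}, in the style of Seidel's linear-programming algorithm. The key structural fact, which follows from the correctness argument already established, is that a search is triggered at step $i$ exactly when $p_i$ lies above the current candidate line $\overline{st}$, and this happens if and only if $p_i$ is one of the (at most two) points of $P_i = \{p_1,\ldots,p_i\}$ that define the bridge of the upper hull of $\{q,p_1,\ldots,p_i\}$. Call these the left bridge point $s^\ast \in R_l$ and the right bridge point $t^\ast \in R_r$. Conditioning on the set $P_i$ and using that each of its $i$ elements is equally likely to be the last-inserted point $p_i$, the probability that $p_i$ equals any particular one of $s^\ast$ or $t^\ast$ is $1/i$.

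The crucial observation is that the two update cases have \emph{crossed} costs. When $p_i = s^\ast \in R_l$, the search in line~\ref{search1} scans $S_r$ and so costs $|S_r|-1 = b_i$ orientation tests, where $b_i$ is the number of points of $P_i$ lying in $R_r$; symmetrically, when $p_i = t^\ast \in R_r$, the search in line~\ref{search2} scans $S_l$ and costs $a_i$ tests, where $a_i$ is the number of points of $P_i$ lying in $R_l$. Hence the expected search cost at step $i$, conditioned on $P_i$, is at most $\tfrac{1}{i}\,b_i + \tfrac{1}{i}\,a_i = \tfrac{a_i+b_i}{i} = 1$, since $a_i + b_i = i$. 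Summing over all steps and removing the conditioning gives an expected total search cost of at most $n$, which combined with the $n$ tests from line~\ref{if-state} yields the claimed bound of $2n$.

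I expect the main obstacle to be pinning down this cross-over cancellation cleanly, rather than the backward analysis itself. A naive bound that pairs the $2/i$ probability of \emph{any} update with the $O(i)$ worst-case search cost would only give $2n$ for the searches alone, hence $3n$ overall; obtaining the sharp constant requires the observation that a left-bridge update pays only for the right-hand points and vice versa, so that the per-step contributions telescope to exactly $a_i + b_i = i$ before the $1/i$ factor is applied. Secondary care is needed for degenerate cases---when a bridge point coincides with $q$ (so one of $s^\ast, t^\ast$ is not among the permuted points) or when $b_i = 0$ so that $S_r = \{q\}$---but each of these only decreases the cost, so the stated upper bound is preserved.
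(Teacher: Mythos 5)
Your proof is correct, but it takes a genuinely different accounting route from the paper's. The paper charges nothing for the per-point test in line~\ref{if-state} (its count $\sum_{i=1}^n i X_i$ covers only the searches), bounds the cost of a search at step $i$ crudely by $i$, and multiplies by the backward-analysis probability $P_i\le 2/i$ to get $2$ per step, hence $2n$ total. You instead pay $n$ up front for the line-\ref{if-state} tests and recover the same $2n$ bound by a sharper analysis of the searches: your ``crossed costs'' observation --- that a left-bridge update scans only $S_r$ (cost $b_i$) and a right-bridge update scans only $S_l$ (cost $a_i$), each event having conditional probability $1/i$, so the expected search cost per step is $(a_i+b_i)/i=1$ rather than $2$ --- is not in the paper and is the essential extra idea that makes your decomposition close. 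Each approach buys something: the paper's is shorter and needs only the bound $P_i\le 2/i$ against a worst-case search cost, while yours honestly counts the orientation test that evaluates the condition in line~\ref{if-state} and shows the searches alone cost only $n$ in expectation, which is a strictly stronger statement about the search phase. One small caveat: your per-search cost of $|S_r|-1$ assumes a running-minimum scan that does not test against the initial candidate; under the paper's convention of one test per member of $S_r$ (which includes $q$) your per-step search cost becomes $1+2/i$ and the total drifts to $2n+O(\log n)$, so you should state the counting convention explicitly to keep the constant at exactly $2n$.
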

\begin{proof}
The running time analysis follows by a simple backwards analysis.
Let $X_i$ be a 0-1 random variable that is 1 if and only if the
condition in line~\ref{if-state} in the ray-shooting algorithm is true.
Since the searching operations in lines~\ref{search1} 
and~\ref{search2}
use an orientation test for each member of $S_r$ (resp., $S_l$),
the total number of orientation tests performed by RayShoot is 
at most
\[
\sum_{i=1}^n i X_i.
\]
By the linearity of expectation, 
\[
E\left[\sum_{i=1}^n i X_i \right]
=
\sum_{i=1}^n i P_i,
\]
where $P_i$ is the probability that the point $p_i$ is above the line $st$.
Now consider the iterations of RayShoot backwards, and note that $p_i$
will satisfy the condition in line~\ref{if-state} if it is one of the two
points that defines the edge of the convex hull of 
$q\cup \{p_1,p_2,\ldots,p_i\}$ intersecting $\vec{R}$. Thus,
$P_i\le 2/i$, which implies that the expected number of orientation
tests performed by RayShoot is 
\[
\sum_{i=1}^n {i} \cdot \frac{2}{i} = 2n.
\]
This completes the proof.
\end{proof}

In practice, we would expect the size of $S_l$ or $S_r$ in iteration $i$
to be closer to $i/2$, since the pivot $q$ is chosen uniformly at random;
hence, the upper bound of $2n$ in Theorem~\ref{thm:time} is conservative.
In any case, the expected running time 
for RayShoot is $O(n)$ with a small constant factor.

\subsection{Analyzing the Randomized Ray-shooting Quickhull Algorithm}
In this subsection, we analyze the expected running time of 
the randomized Ray-shooting Quickhull algorithm.

\begin{theorem}
Given a set, $S$, of $n$ points in the plane,
the randomized Ray-shooting Quickhull algorithm constructs the convex hull
of $S$ in $O(n\log h)$ expected time, where $h$ is the number of points of $S$
on the convex hull.
\end{theorem}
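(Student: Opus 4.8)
The plan is to charge the total running time to the points and show that, in expectation, each point is active in only $O(\log h)$ recursive calls. By Theorem~\ref{thm:time}, a single call on a subset $S'$ performs $O(|S'|)$ expected orientation tests for its ray-shooting query, and the subsequent pruning and partitioning of $S'$ also take $O(|S'|)$ time. Summing over the whole recursion tree, the total expected work is $O\!\left(E\big[\sum_{\text{calls}} |S'|\big]\right) = O\!\left(E\big[\sum_{x\in S} d_x\big]\right)$, where $d_x$ is the number of calls in which $x$ lies in the current subset. So the theorem reduces to proving $\sum_x E[d_x] = O(n\log h)$. I would treat the upper and lower hulls separately and symmetrically, and dispose of the $O(n)$ cost of finding the extreme points $p$ and $r$ up front.

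The first key step is a structural observation that recasts the recursion as a randomized binary-search-tree (Quicksort) process on the hull edges. Restricted to the arc between the current $p$ and $r$, the upper hull of $S'$ coincides with the corresponding arc of the upper hull of $S$, so every bridge $(s,t)$ returned is a genuine edge of the final hull. Listing the upper-hull edges left to right as $e_1,\dots,e_m$ (with $m\le h$), I would show that a non-hull point $x$ is pruned exactly in the call whose bridge equals $e_x$, the hull edge directly above $x$; hence $d_x$ equals the depth of $e_x$ in the recursion tree, and all points sharing the same overhead edge share the same depth. Writing $N_j$ for the number of non-hull points in the vertical shadow of $e_j$, choosing the pivot $q$ uniformly at random makes the bridge equal $e_j$ with probability proportional to $N_j$. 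Crucially, until some edge of a contiguous range $\{e_j,\dots,e_k\}$ is first chosen as a pivot, none of the points under that range are pruned and the whole range stays together in one subproblem, so the relative selection probabilities within the range are frozen at their original values.

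With this setup the standard ancestor argument applies: $e_k$ is an ancestor of $e_j$ if and only if $e_k$ is the first edge of the contiguous range between them to be chosen as a pivot, an event of probability $N_k/(N_{\min}+\cdots+N_{\max})$ over that range. Therefore
\[
\sum_{x} E[d_x] = \sum_{j=1}^{m} N_j\,E[d_{e_j}]
= \sum_{j} N_j\Big(1+\sum_{k<j}\frac{N_k}{N_k+\cdots+N_j}+\sum_{k>j}\frac{N_k}{N_j+\cdots+N_k}\Big).
\]
The heart of the proof is to bound the right-hand side by $O(n\log h)$. Fixing $j$ and writing $S_{jk}=N_j+\cdots+N_k$, I would use the telescoping inequality $N_k/S_{jk}=1-S_{j,k-1}/S_{jk}\le \ln(S_{jk}/S_{j,k-1})$ to obtain $\sum_{k>j}N_k/S_{jk}\le \ln(n/N_j)$, and symmetrically for the left sum. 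This reduces the bound to $\sum_j N_j\ln(n/N_j)$, an entropy expression that is maximized when the $N_j$ are balanced and is therefore at most $n\ln m\le n\ln h$ up to lower-order terms.

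The main obstacle I anticipate is the second step: justifying rigorously that the non-uniform, point-count-weighted pivot selection still yields the clean ancestor probability $N_k/S_{jk}$. This requires the invariance argument that selecting a pivot outside a contiguous edge-range neither prunes nor reorders the points inside it, so the first pivot to fall inside the range is $e_j$ with probability exactly $N_j/S_{jk}$ computed from the original shadow counts. I would also handle the degenerate bridges ($s=t=q$ when the pivot is itself a hull vertex) and verify that the $h$ hull points contribute only a lower-order $O(h\log h)=O(n\log h)$ term. Once the ancestor probabilities are pinned down, the telescoping-plus-entropy calculation is routine.
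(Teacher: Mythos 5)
Your strategy---recasting the recursion as a random binary search tree on the hull edges and summing ancestor probabilities---is a genuinely different route from the paper, which instead writes a Quicksort-style recurrence $T(n,h)\le cn+\frac{1}{n}\sum_{i=0}^{n-1}\max_{h_1+h_2=h}\{T(i,h_1)+T(n-i-1,h_2)\}$ and verifies $T(n,h)\le dn\log h$ by induction. But the step you yourself flag as the main obstacle is where the argument genuinely breaks. The invariance you need---that a pivot falling outside a contiguous edge range neither prunes nor reassigns the points under that range---is false. When the bridge of a call is $e_l=(s,t)$, the algorithm prunes \emph{every} point inside the quadrilateral $(p,s,t,r)$, and this quadrilateral generally contains points whose overhead hull edge is some $e_m\ne e_l$ (for instance, points lying in the vertical shadow of $e_m$ but close to the base segment, hence below the line through $\overline{tr}$). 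So out-of-range pivots do delete points under in-range edges, and they do so non-uniformly: an edge whose shadow points hug it from below keeps its weight, while a neighboring edge whose shadow points sit near the base can lose almost all of its weight after a single out-of-range round. Consequently the probability that the first in-range bridge is $e_k$ is not $N_k/(N_j+\cdots+N_k)$ computed from the original counts and can be strictly larger, so your expression for $E[d_{e_j}]$ is not even a valid upper bound. (Your claim that a non-hull point is pruned exactly in the call whose bridge is its overhead edge fails for the same reason; that only gives $d_x\le\mathrm{depth}(e_x)$, which is harmless for an upper bound, but it is a symptom of the same modeling error.)

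A second, independent problem is that the ray-shooting direction is the normal to the \emph{current} base segment, which changes at every level of the recursion; hence the map from a pivot to the bridge it produces is not the fixed vertical-shadow map, and the quantities $N_j$ are not well defined independently of the recursion path. To repair the argument you would need a monotonicity statement controlling how pruning and the changing ray direction can shift the in-range selection probabilities, which does not look cheap. The alternative is to retreat to what the paper actually does: use only the facts that the pivot is uniform over the surviving points, that each call costs $O(|S'|)$ expected work (your use of Theorem~\ref{thm:time} here is fine), and that the hull vertices split between the two children, and absorb all the adversarial structure into a $\max_{h_1+h_2=h}$ inside a recurrence. Your telescoping-plus-entropy computation at the end is correct as algebra, but it rests on ancestor probabilities you have not established.
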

\begin{proof}
The proof is an adaptation of an analysis of the expected running time of 
the Quicksort algorithm~\cite{ahu,iliopoulos,kirkpatrick}.
Let $T(n,h)$ denote the expected running time of the randomized Ray-shooting 
Quickhull algorithm on 
an instance of size $n\ge 2$ with hull size $h\ge 2$.
Also, to simplify the notation, let $T(0,h)=0$ and $T(1,h)=0$.
Then, by the way a problem instance in the randomized Ray-shooting Quickhull 
algorithm is divided,
there is a constant $c\ge1$, such that
the general case is as follows:
\[
T(n,h) \le cn + \frac{1}{n} \sum_{i=0}^{n-1} \max_{h_1+h_2=h} 
      \{ T(i,h_1) + T(n-i-1,h_2)\} ,
\]
where, by Theorem~\ref{thm:time}, $c=2$ if we are focused on counting
orientation tests.
We claim that 
there is a constant, $d\ge1$, such that $T(n,h)\le dn$, for
$n\ge 2$ and $h=1,2$, 
and $T(n,h)\le dn\log h$ otherwise;\footnote{W.l.o.g.,
   we also assume in this proof that ``$\log$'' is the natural logarithm.}
hence, by this induction hypothesis,
\[
T(n,h) \le cn + \frac{1}{n} \left( 2d(n-1)\log(h-1)+\sum_{i=1}^{n-2} \max_{h_1+h_2=h} 
      \{ di\log h_1 + d(n-i-1)\log h_2\} \right).
\]
By elementary calculus,
the righthand side is maximized with
$h_1= ih/n$ and $h_2=(n-i-1)h/n$.
Thus, 
\begin{eqnarray*}
T(n,h) &\le& cn + \frac{d}{n} \left(2(n-1)\log(h-1)+\sum_{i=2}^{n-2} 
      \left( i\log (ih/n) + (n-i-1)\log ((n-i-1)h/n)\right)\right) \\
 &\le& cn + \frac{2d}{n} \sum_{i=1}^{n-1} i\log (ih/n) \\
 &=& cn + \frac{2d}{n} \sum_{i=1}^{n-1} i\log i  
  + \frac{2d}{n}(\log h) \sum_{i=1}^{n-1} i
  - \frac{2d}{n}(\log n) \sum_{i=1}^{n-1} i.
\end{eqnarray*}
By another application of calculus,
\[
\sum_{i=1}^{n-1} i\log i \le \int_1^n (x\log x)dx \le (n^2/2)\log n - n^2/4+1/4.
\]
Also, it is well-known that $\sum_{i=1}^{n-1} i = n(n-1)/2$.
Therefore,
\begin{eqnarray*}
T(n,h) &\le& cn + dn\log n - dn/2 + d/(2n) + d(n-1)\log h - d(n-1)\log n \\
 &\le& dn\log h,
\end{eqnarray*}
for $d = 2c+1$.
\end{proof}

Thus, the constant in the expected
running time for our randomized Ray-shooting Quickhull algorithm is small.

\section{Analyses for Various Input Distributions}
\label{sec:experiments}
In this section, we provide analyses of the deterministic Quickhull and 
randomized Ray-shooting Quickhull algorithms for various input distributions.

\subsection{A Lower Bound for Deterministic Quickhull}
We begin with a lower bound distribution for the deterministic Quickhull algorithm,
which makes explicit and generalizes implicit constructions
of Fournier~\cite{fournier}
and D{\'evai} and Szendr{\'e}nyi~\cite{devai}.

\begin{theorem}
\label{thm:lower}
For any $n\ge 3$ and $3\le h\le n$, there is a set of $n$ points
with a convex hull of size $h$ 
that causes Quickhull to run in $\Omega(nh)$ time.
\end{theorem}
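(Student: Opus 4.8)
The plan is to exhibit a point set on which Quickhull degenerates into a linear ``caterpillar'' recursion of depth $\Theta(h)$, with $\Omega(n)$ surviving points at every level, so that the total work $\sum_{\text{calls}} |S'|$ sums to $\Omega(nh)$ rather than the $O(n\log h)$ one would obtain from a balanced recursion. First I would reduce everything to the upper hull: place $p$ at the far left and $r$ at the far right, put all $h-2$ remaining hull vertices strictly above $\overline{pr}$, and place no points below $\overline{pr}$, so that the lower subproblem terminates immediately and all of the cost is incurred on the upper hull.

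The core of the construction is a sequence of upper-hull vertices $v_0=p,v_1,\dots,v_{h-2},v_{h-1}=r$ chosen on a convex curve whose height profile decays steeply from left to right, so that Quickhull discovers them strictly in left-to-right order. Concretely, I would arrange the $v_j$ so that, on the base $\overline{v_{j-1}v_{h-1}}$ (whose right endpoint is always $r$), the farthest point is the immediately adjacent vertex $v_j$. Then the left subproblem above $\overline{v_{j-1}v_j}$ is empty while the right subproblem above $\overline{v_j v_{h-1}}$ inherits everything else; iterating yields the desired depth-$\Theta(h)$ path.

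To keep $\Omega(n)$ points alive along this path I would add the $n-h$ interior ``filler'' points as a tiny cluster placed just below the last hull edge $\overline{v_{h-2}v_{h-1}}$, near $r$. Because these points are interior, they are never the farthest from any base and hence are never selected as an apex; because they lie above every chord $\overline{v_j v_{h-1}}$ for $j\le h-2$, they always fall into the heavy right subproblem; and because lying above $\overline{v_j v_{h-1}}$ places them outside the pruning triangle $(v_{j-1},v_j,v_{h-1})$, they are never pruned. Hence each of the first $\Theta(h)$ calls scans at least $n-h=\Omega(n)$ points when computing the farthest point and testing for pruning. Summing $\Omega(n)$ work over $\Theta(h)$ calls gives the bound; when $h$ is close to $n$ the $\Theta(h)$ surviving hull vertices themselves already supply $\Omega(n)$ points per call (the right subproblem at step $j$ still contains $v_{j+1},\dots,v_{h-2}$), so a single argument covers the whole range $3\le h\le n$.

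The main obstacle I expect is simultaneously (i) guaranteeing that the apex chosen on the rotating base $\overline{v_{j-1}v_{h-1}}$ is exactly the adjacent vertex $v_j$ for every $j$, and (ii) doing so with coordinates representable in $O(\log n)$ bits. Forcing the left-to-right discovery order naively wants exponentially spaced heights, which would blow up the bit complexity; the delicate step is to discretize the convex profile---for example, integer points on a suitably scaled convex arc---so that the farthest-point comparisons still break the right way while all coordinates remain polynomially bounded. This is precisely the point at which the construction makes explicit, and generalizes to every $h$, the implicit bad instances of Fournier and of D\'evai and Szendr\'enyi, and it is where the orientation-test tie-breaking and general-position checks must be verified with care.
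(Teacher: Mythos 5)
Your high-level plan is the paper's plan: a ``caterpillar'' recursion in which every call peels off exactly one hull vertex (the one adjacent to an endpoint of the current base), the sibling subproblem is empty, and the $n-h$ interior filler points are parked below the last hull edge so that they land in the heavy subproblem and escape pruning at every level, giving $\Omega(n)$ work over $\Theta(h)$ calls. (The paper runs the mirror image --- it peels vertices from the right and parks the fillers at the left end --- but that is cosmetic.)

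The gap is that you never exhibit the point set, and the point set is the entire content of the theorem. You correctly identify the crux --- forcing the farthest-from-base point to be the \emph{adjacent} hull vertex on every one of the $\Theta(h)$ bases --- and then explicitly defer it as ``the delicate step.'' A proof sketch that names the hard step and postpones it is not a proof. The paper's resolution is short and worth knowing: take the hull vertices to be $(0,0)$ together with $(2^i,2^{2i})$ for $i=1,\dots,h-1$, i.e., geometrically spaced abscissae lifted to the parabola $y=x^2$. The chord from $(0,0)$ to $(2^i,2^{2i})$ has slope $2^i$, and the farthest point from a chord of a convex curve is the point where the tangent is parallel to it; since $(x^2)'=2x$, that point is $x=2^{i-1}$ --- exactly the hull vertex adjacent to the moving endpoint. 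So no tie-breaking or general-position care is needed at all: the adjacency property is an identity of the parabola, not something to be engineered by perturbation. Relatedly, your insistence on polynomially bounded coordinates is a self-imposed obstacle: the theorem places no restriction on coordinates, and the paper's $O(\log n)$-bit remark refers to \emph{floating-point} representation, under which $2^i$ and $2^{2i}$ cost only an $O(\log n)$-bit exponent. Dropping that constraint is precisely what turns your ``delicate discretization'' into a two-line calculus computation. (One small further point: your claim that interior filler points are ``never the farthest from any base'' deserves a sentence --- the maximum distance to the base line is attained at an extreme point of the subproblem in the normal direction, and the surviving hull vertices dominate a cluster hugging a hull edge --- but that is minor compared to the missing construction.)
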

\begin{proof}
Consider the set, 
\[
S= (0,0)\cup \{(2^i,2^{2i}\}, \mbox{~~for $i=1,2,\ldots,h-1$}.
\]
$S$ can be viewed as a set of $h$ exponentially separated points
on the $x$-axis that are then projected onto the parabola, 
\[
y=x^2.
\]
Since the points of $S$ are in convex position, they are all on the boundary
of the convex hull of $S$.
Next, let $S'=S\cup T$,
where $T$ is a set of $n-h-1$ points in the interior of the triangle,
$((0,0),(1,1),(2,4))$.
We claim that the $j$-th call to Quickhull on $S'$ will have base edge,
$((0,0),(2^i,2^{2i}))$, where $i=h-j$.
This is clearly true initially. Assume this is true inductively for 
call $j$, and consider call $j+1$.
The edge, $((0,0),(2^i,2^{2i}))$, for $i=h-j$ has slope $2^i$;
hence, this instance of Quickhull will choose the point that has a 
tangent with this slope.
Since the derivative of $f(x)=x^2$ is $2x$, by elementary calculus,
this point of tangency 
is the point, $(x,x^2)$,
such that
\begin{eqnarray*} 
2x &=& 2^i, \mbox{~~i.e.,} \\
x &=& 2^i/2,
\end{eqnarray*}
which implies that the point of tangency is the point $(2^{i-1},2^{2(i-1)})$.
Accordingly, this $(j+1)$-st call 
of Quickhull will next make a call on a set of points that is only 
one fewer than that for the $j$-th call, until all that is left is $T$
and the triangle,
$((0,0),(1,1),(2,4))$.
Thus, since each call includes $T$, the total running time 
of Quickhull on $S'$ is at least 
\[
\sum_{i=0}^{h-3} n-i,
\]
which is $\Omega(nh)$.
\end{proof}

Note that the $n$ points used in the proof of Theorem~\ref{thm:lower} can
each be represented exactly in floating point using $O(\log n)$ bits.

\subsection{Expected-time Performance for Uniform Distributions}
Overmars and van~Leeuwen~\cite{overmars}
show that the deterministic Quickhull algorithm has an expected 
running time of $O(n)$ for $n$ points chosen uniformly at random 
from a bounded convex region.
In this subsection, we prove a similar result for 
the randomized Ray-shooting Quickhull algorithm.

\begin{theorem}
\label{thm:linear}
If $n$ points are chosen independently at
random from a uniform distribution in a bounded
convex region, $R$, then the expected running time 
of the randomized Ray-shooting Quickhull algorithm is $O(n)$.
\end{theorem}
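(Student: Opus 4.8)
The plan is to charge the entire running time to the total size of all recursive subproblems and then show this total is $O(n)$ in expectation. At each recursive call on a point set $S_v$ sitting in a triangular region, choosing the pivot, answering the ray-shooting query (which by Theorem~\ref{thm:time} costs an expected $O(|S_v|)$ orientation tests), pruning the core polygon, and partitioning the survivors all take $O(|S_v|)$ time. Hence the expected running time is $O\!\left(E\big[\sum_v |S_v|\big]\right)$, where the sum ranges over the nodes $v$ of the recursion tree and the expectation is over both the random pivots and the random placement of the points. Grouping nodes by their depth $k$, I would write $\sum_v |S_v| = \sum_k N_k$, where $N_k$ is the number of input points that survive pruning through level $k$; equivalently, $N_k$ counts the points lying above the staircase of bridge edges discovered in the first $k$ levels (upper and lower hulls are treated symmetrically).

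The heart of the proof is a \textbf{geometric-decay lemma}: there is an absolute constant $\beta<1$ with $E[N_{k+1}]\le \beta\,E[N_k]$. By linearity of expectation it suffices to prove this one region at a time. Conditioned on the triangular region $T$ of a level-$k$ node and on the number $m$ of input points inside it, those points are uniform in $T$, by the standard fact that conditioning a uniform point process on which points land in a subregion leaves them uniform there. For such points I would show that one step of the algorithm prunes a constant fraction in expectation: the two surviving ears above $\overline{ps}$ and $\overline{tr}$ together hold at most $\beta m$ points in expectation, equivalently the pruned quadrilateral $(p,s,t,r)$ holds at least a constant fraction. The key geometric observation is that the bridge $\overline{st}$ is always an edge of the region's upper hull, so its supporting line lies along the top boundary of $T$; consequently the inscribed quadrilateral spanning the two base corners $p,r$ and the two near-boundary bridge endpoints $s,t$ always captures a bounded-below fraction of the area of $T$, regardless of where along the width the random pivot places the bridge. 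Translating area to point count (up to the $O(1)$ points defining $p,r,s,t$) then yields the constant-fraction bound. Given the lemma, $E[N_k]\le \beta^{k}n$, so $E[\sum_v|S_v|]=\sum_k E[N_k]\le n/(1-\beta)=O(n)$, finishing the proof.

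The main obstacle is establishing the pruning constant $\beta$ \emph{uniformly} over all regions that can arise. These regions are not arbitrary: each is the intersection of $R$ with a triangle, hence convex, but it ranges from a fat region at shallow depths to a thin cap bounded on one side by an arc of $\partial R$ at large depths. I would dispatch the thin caps by affine invariance — a shallow cap of a smooth boundary is affinely close to a parabolic segment, and the expected-pruned-fraction functional is affine-invariant, so it equals a fixed constant in the limit — and handle the remaining shapes by a compactness argument: after affine normalization the space of convex regions is compact (Blaschke selection), the expected-pruned-fraction is continuous, and it is strictly positive on every shape, so it is bounded below. Care is also needed to make the conditioning rigorous, since the regions are themselves defined by input points; the clean route is to fix the few points determining a region's boundary and argue about the remaining points, which are then independent and uniform on the region's interior. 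An alternative that sidesteps re-deriving these estimates is to invoke directly the distributional cap-size bounds of Overmars and van~Leeuwen~\cite{overmars} used for deterministic Quickhull and adapt them to the randomized bridge produced here.
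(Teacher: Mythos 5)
Your overall plan---charge all work to $\sum_k N_k$ and prove a per-level geometric decay $E[N_{k+1}]\le\beta E[N_k]$---is structurally sound and genuinely different from the paper's argument, which instead tracks two quantities (the point count and the total \emph{area} of the bounding triangles) and sets up a two-case recurrence according to whether the farthest point $v$ sits high or low in its bounding triangle. However, the justification you offer for the decay lemma has a real gap: the ``key geometric observation'' is false as stated. The quadrilateral $(p,s,t,r)$ does \emph{not} always capture a bounded-below fraction of the area of $T$ ``regardless of where along the width the random pivot places the bridge.'' Take $T$ to be the flat cap under $y=x(1-x)$ over $[0,1]$ with $p=(0,0)$ and $r=(1,0)$; if the pivot lands within horizontal distance $\epsilon$ of $p$, then $s$ and $t$ both lie near $p$ and the quadrilateral degenerates to essentially the triangle $(p,t,r)$ with $t=(\epsilon,\epsilon(1-\epsilon))$, whose area is a $\Theta(\epsilon)$ fraction of $\operatorname{area}(T)$. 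So there is no uniform deterministic lower bound, and the affine-invariance/Blaschke-compactness machinery cannot rescue a pointwise claim that is simply false: the constant $\beta<1$ can only come from \emph{averaging over the pivot's position}, which your sketch never actually carries out.

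The lemma itself is true, and it has a short direct proof that makes the compactness argument unnecessary. The pruned quadrilateral always contains the triangle $(p,q,r)$, since $p$, $q$, $r$ all lie in the convex region $T$ and $q$ lies below the bridge. Let $w(y)$ be the width of $T$ at height $y$ above the line through $p$ and $r$ and let $A=\operatorname{area}(T)$. Because $T$ is contained in a triangle with base $\overline{pr}$, we have $w(y)\le |pr|$ for all $y$, and hence
\[
\int_0^H y\,w(y)\,dy \;\ge\; \frac{A^2}{2|pr|},
\]
by the rearrangement fact that the integral is minimized when the mass of $w$ is pushed down to the base. Therefore $E_q[\operatorname{area}(pqr)]=\tfrac{|pr|}{2}\,E_q[\operatorname{dist}(q,\overline{pr})]\ge A/4$ for a uniformly random pivot $q\in T$, and conditioning on $q$ the remaining points are still uniform in $T$, so an expected quarter of them fall in $(p,q,r)$ and are pruned. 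This yields $\beta=3/4$ uniformly over all regions that arise, with no case analysis on thin caps. With that substitution (and the conditioning care you already flag), your route goes through and is arguably cleaner than the paper's; without it, the central step of your proof is unsupported.
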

\begin{proof}
For the sake of simplicity, our proof 
does not try to optimize the constant factor in the $O(n)$ bound.
Each recursive call in the randomized Ray-shooting Quickhull algorithm
(after the first) is defined by a subset of points
from $R$ contained in a bounding triangle, $pzr$.
Since the points in $R$ are chosen uniformly at random,
the total expected running of randomized Ray-shooting Quickhull is bounded
by the total area of all of these triangles.
Consider one such triangle, and,
w.l.o.g., let $\overline{pr}$ denote the base of this triangle.
Let $v$ denote the point inside the triangle, $pzr$,
that would be chosen by the deterministic Quickhull algorithm, i.e.,
the point farthest from the line $\overline{pr}$.
We distinguish two cases, depending on the how far away $v$ is
from $\overline{pr}$.
\begin{itemize}
\item
Case 1: the distance from $\overline{pr}$ to $v$ is at least $1/6$ the
distance from $\overline{pr}$ to $z$.
(See Figure~\ref{fig:proof}.)
Let $u$ and $w$, respectively, be the midpoints of the edges $\overline{pv}$
and $\overline{vr}$.
Then the triangle $uvw$ has one fourth the area of the triangle $pvr$; hence,
with probability $1/4$ the randomized Ray-shooting Quickhull 
algorithm will choose a pivot, $q$,
inside the triangle $uvw$ for this call.
If this occurs, then the randomized Ray-shooting Quickhull algorithm
will at least eliminate 
all the points in the triangle $pqr$, 
which has area at least one half the area of the 
triangle $pvr$; hence, this choice for $q$ eliminates 
at least $(1/6)/2=1/12$ of the points for this
call in expectation; hence, the two recursive calls are performed
on at most $(11/12)n$ points in expectation for this case.
\end{itemize}

\begin{figure}[hbt]
\centering
\includegraphics[width=3in, trim = 0.9in 3.6in 7.5in 1.4in, clip]{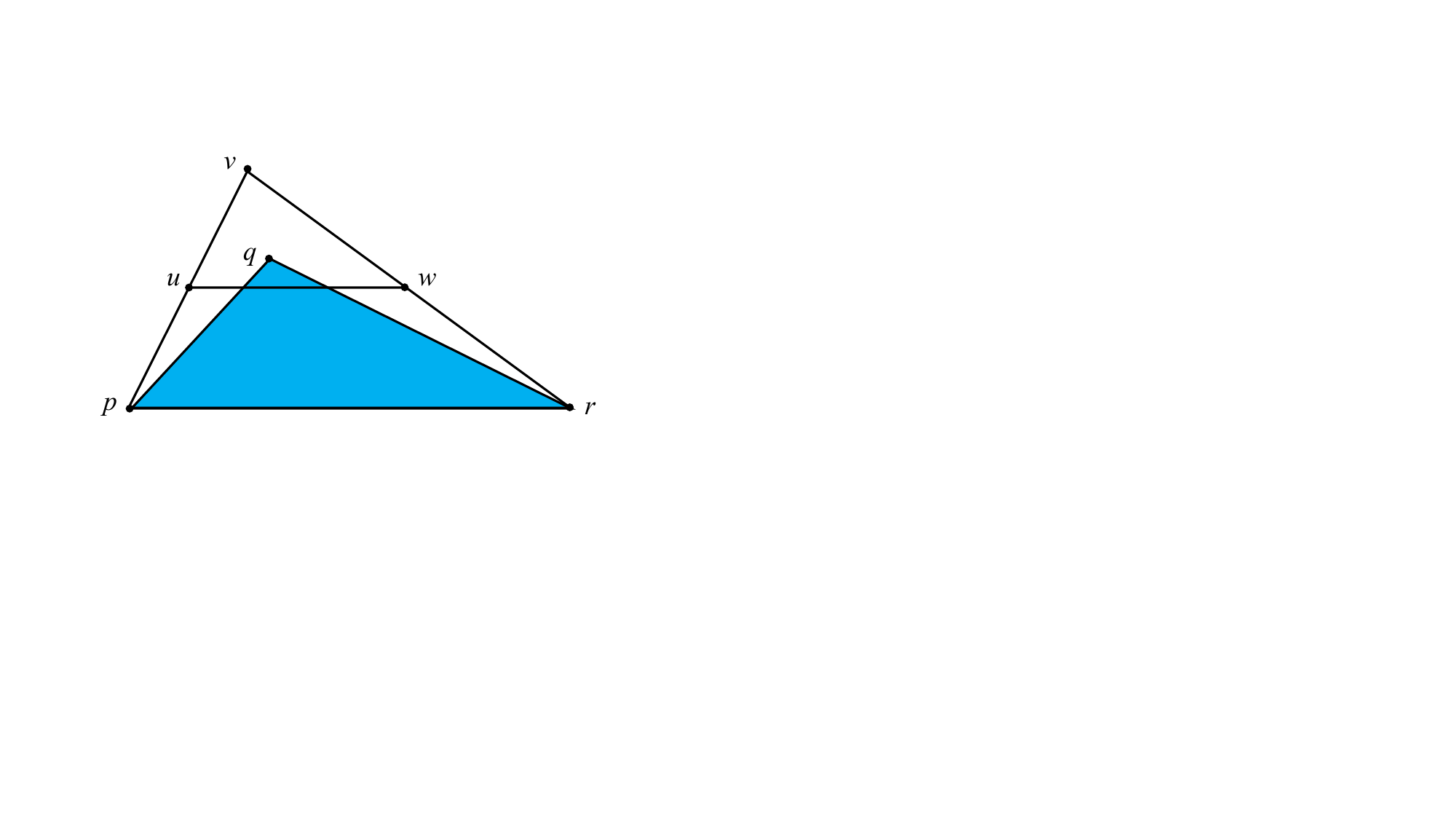}
\caption{\label{fig:proof} Illustrating case 1 of the proof of Theorem~\ref{thm:linear}.}
\end{figure}

\begin{itemize}
\item
Case 2:
the distance from $\overline{pr}$ to $v$ is less than $1/6$ the
distance from $\overline{pr}$ to $z$. To be conservative
for this case, let us consider $v$ being at distance exactly $1/6$
the distance from $\overline{pr}$ to $z$. 
(See Figure~\ref{fig:proof2}.)
We claim that the total area of the triangles for the 
two recursive calls for this case is at most a constant fraction, $\epsilon<1$,
of that for the triangle, $pzr$.
For the sake of considering a
worst case for creating large subtriangles, let $q'$ be the point
above the pivot, $q$, at distance $1/6$ the distance from $\overline{pr}$ to $z$.
Let $z_1$ be the point on $\overline{pz}$ such that $\overline{z_1q'}$
has the same slope as $\overline{zr}$, and
Let $z_2$ be the point on $\overline{zr}$ such that $\overline{q'z_2}$
has the same slope as $\overline{pz}$.
Thus, by convexity,
the slopes of the convex hull edges in $pzr$ must be between
the slope of $\overline{pz}$ and $\overline{zr}$; hence,
the triangles for the recursive calls for $pzr$ must exclude
the quadrilateral $(q',z_2,z,z_1)$.
Further, let $r'''$ be a point on the line $\overline{p'r'}$ at distance
$1/4d(p',r')$, and let $r''$ be the point directly below it on $\overline{pr}$.
Then the area of the quadrilateral $(p,p',r''',r'')$ and its twin (by symmetry)
on the right side are at most
$(11/12)\cdot(1/6)$ the area of the triangle $pzr$, 
whereas the area of the intersection
of $R$ with $pzr$ in this case must be at least $1/6$ the area of the 
triangle $pzr$. Thus, with probability at least $1/12$, the point $q'$ will be
between $r'''$ and its twin on the right.
At an extreme case, then, when $q'=r'''$, because the ratio of the area of 
two similar triangles is proportional to the square of 
the ratio of their corresponding sides, the sum of the areas of the triangles
$p'z_1q'$ and $q'z_2r'$ is at most $(1/4)^2+(3/4)^2=5/8$ that of the triangle
$p'zr'$. Thus,
the quadrilateral $(q',z_2,z,z_1)$
has area at least $3/8$ that of $p'zr'$, which is at least $(3/8)\cdot(5/6)^2$
that of $pzr$.
\end{itemize}
Thus, we can bound the expected running time, $T(n)$, of 
the randomized Ray-shooting
Quickhull algorithm for $R$ using the following recurrence,
for constants, $0<\delta,\epsilon<1$:
\[
T(n) \le (1-\delta)\cdot\left(T(n_1)+T(n_2)+cn\right)
         + \delta\left(T(n_1')+T(n_2')+cn\right),
\]
where $c\ge 1$ is a constant, $n_1+n_2=n$, and $n_1'+n_2'\le \epsilon n$.
Therefore, by an induction argument, $T(n)$ is $O(n)$.
\end{proof}

\begin{figure}[hbt]
\vspace*{-12pt}
\centering
\includegraphics[width=3in, trim = 0.9in 3.3in 7.5in 1.4in, clip]{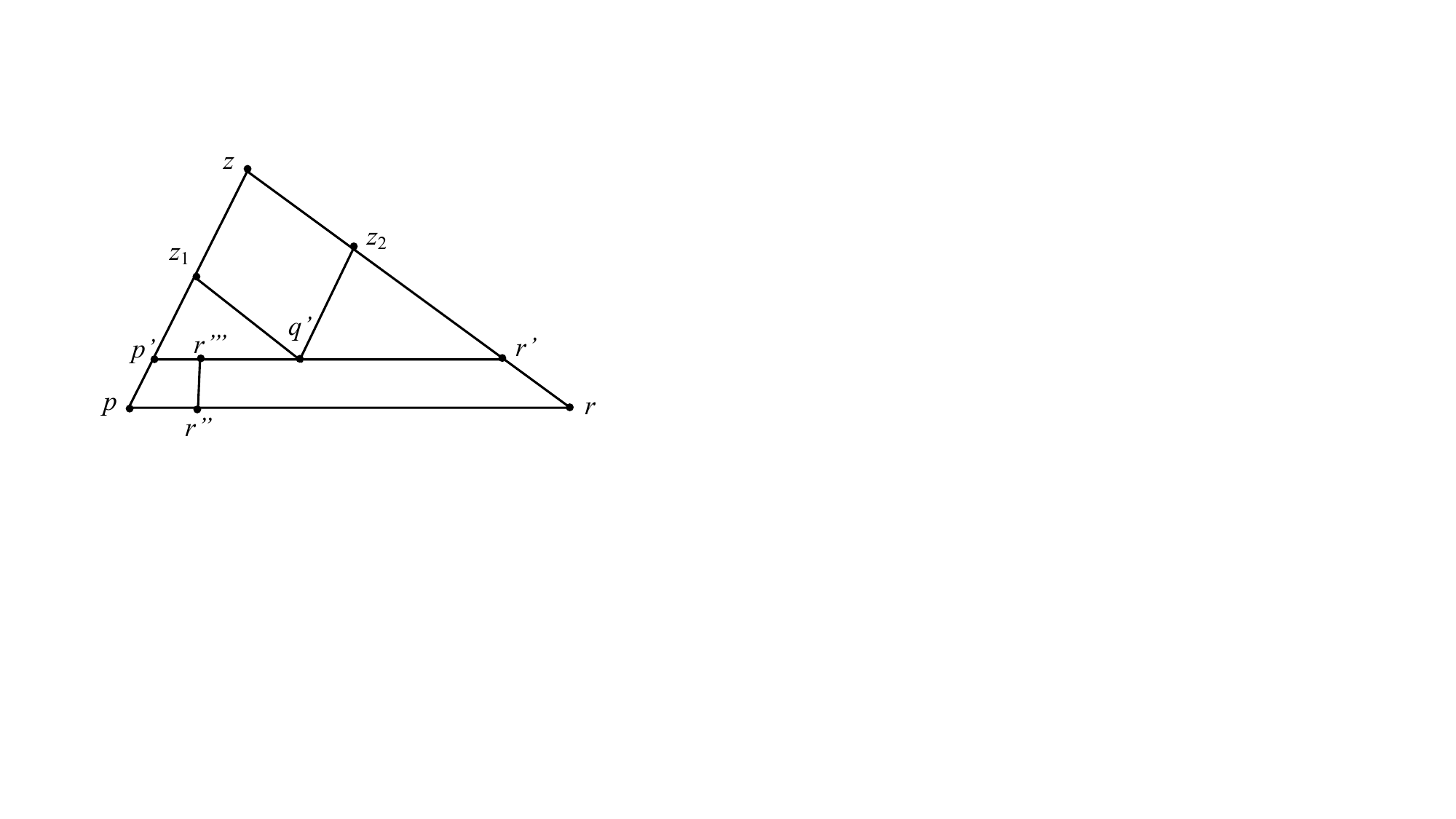}
\caption{\label{fig:proof2} Illustrating case 2 of the proof of Theorem~\ref{thm:linear}.}
\end{figure}

\subsection{Experiments}

In this section, we support our theoretical analysis with experimental results,
comparing the performance of our randomized Ray-shooting Quickhull algorithm
to the deterministic Quickhull algorithm.
The deterministic Quickhull algorithm is already known empirically to perform 
well in practice compared to other convex hull 
algorithms~\cite{gamby2018convex, mitura, overmars};
so we have restricted our experiments to be a head-to-head comparison
of the deterministic Quickhull and randomized Ray-shooting Quickhull
algorithms.
As we show below, our experiments provide empirical evidence that
the randomized version is
competitive with and in some cases outperforms the deterministic 
version.\footnote{Our implementation will be made available on GitHub once
   anonymity is no longer a concern.}

To maintain consistency with previous experimental work,
our implementation of the deterministic Quickhull algorithm was 
heavily adapted from existing 
sources~\cite{noauthor_quickhull_2023, joshi_anantjoshiczquickhull_2024}.
Also, because our experiments are focused on inputs that themselves 
have entropy, 
our implementation of the randomized 
Ray-shooting Quickhull algorithm skips the 
random permutation in line~\ref{permute}; see, e.g., 
Chung, Mitzenmacher, and Vadhan~\cite{chung2013simple} 
for additional support for this choice.

\subsubsection{Experimental Setup}
The algorithms were implemented in C++ and compiled with the same compiler and
optimization flags.
Our experiments were run on a machine with an Intel i5-1240P CPU and 8GB of RAM.
We tested our algorithm on five different input distributions.
A summary of the various input distributions can be found in 
Figure~\ref{tab:input_distributions}.

\begin{figure}[H]
  \centering
  \begin{tabular}{|c|c|}
    \hline
    {\bf Distribution} & {\bf Description} \\
    \hline
    Square & Points are uniformly distributed within a square \\
    \hline
    Circle & Points are uniformly distributed within a circle \\
    \hline
    On Circle & Points are uniformly distributed on a circle \\
    \hline
    Quad & Points are of the form of $(x, x^2)$ such that $x$ is uniformly distributed \\
    \hline
    Worst & For $i = 1, \dots, n$ points take the form of $(2^i, 2^{2i})$ randomly shuffled \\
    \hline
  \end{tabular}
  \caption{Descriptions of the various input distributions.}
  \label{tab:input_distributions}
\end{figure}

For each experiment, with the various input distributions and sizes, we ran
1000 trials per algorithm and took the average time of each in milliseconds.
Intuitively, we should expect the Square and Circle distributions to
favor the deterministic Quickhull algorithm, which runs in $O(n)$ 
expected time for these distributions~\cite{overmars} with low overhead.
The On-Circle and Quad distributions shouldn't favor either algorithm
asymptotically, as both should run in $O(n\log n)$ expected time 
for these input distributions.
The Worst distribution, on other hand, should favor 
the randomized Ray-shooting Quickhull algorithm, since it is
the distribution of Theorem~\ref{thm:lower}.

\subsubsection{Results}
The results for the experiment containing the points randomly chosen
within the unit circle and square are shown in Figure \ref{fig:shapes}.
In spite of the Square and Circle distributions being designed to
favor the deterministic Quickhull algorithm,
we see here that the randomized Ray-shooting Quickhull
algorithm performs comparably to the
deterministic algorithm, only being slower by a small 
constant factor.\footnote{All of our plots are log-log plots.}
For both these distributions, one would expect the deterministic algorithm
to perform slightly better than the randomized version, which is supported
by our results.

\begin{figure}[H]
  \centering
  \begin{subfigure}{0.3\textwidth}
    \includegraphics[width=\textwidth]{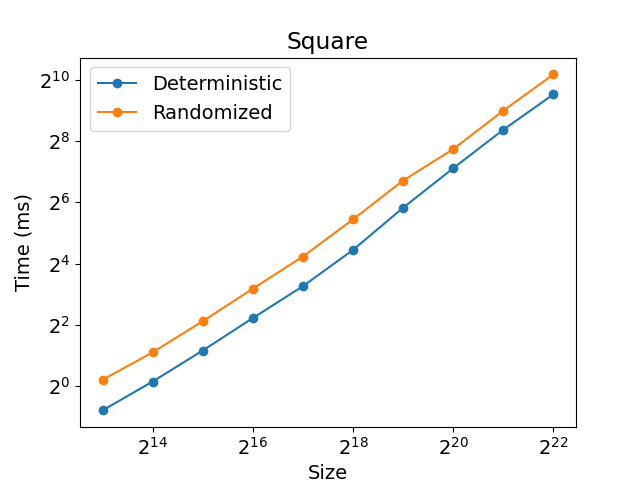}
  \end{subfigure}
  ~
  \begin{subfigure}{0.3\textwidth}
    \includegraphics[width=\textwidth]{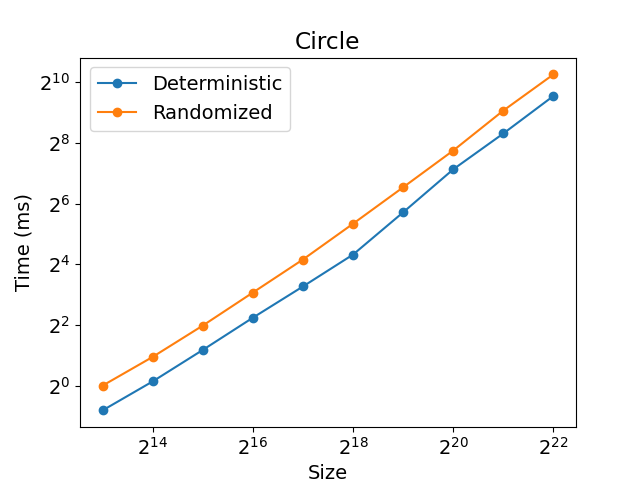}
  \end{subfigure}
  \caption{Distributions where points lie within the unit square and circle.}
  \label{fig:shapes}
\end{figure}

For the On-Circle and Quad distributions, 
we see that the randomized algorithm
performs significantly better across all input sizes, which we
found surprising.
Finally, for the Worst distribution, we see that unsurprisingly
that the randomized Ray-shooting Quickhull algorithm is significantly faster.
These running-time plots are shown in Figure~\ref{fig:worst}.

\begin{figure}[H]
  \centering
  \begin{subfigure}{0.3\textwidth}
    \includegraphics[width=\textwidth]{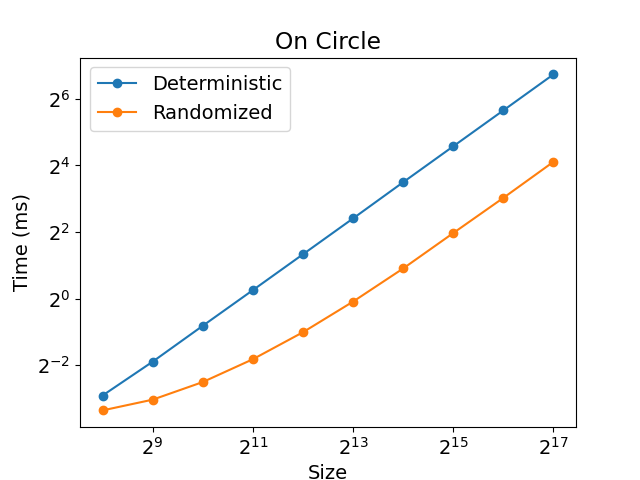}
  \end{subfigure}
  ~
  \begin{subfigure}{0.3\textwidth}
    \includegraphics[width=\textwidth]{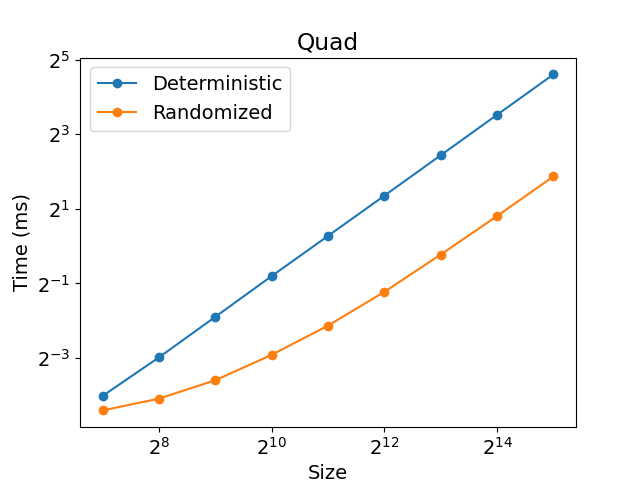}
  \end{subfigure}
  ~
  \begin{subfigure}{0.3\textwidth}
    \includegraphics[width=\textwidth]{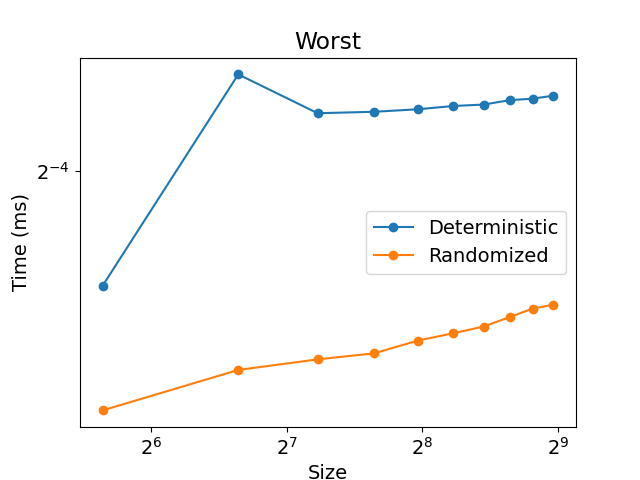}
  \end{subfigure}
  \caption{Performance for the On-Circle, Quad, and Worst distributions.}
  \label{fig:worst}
\end{figure}

\clearpage

\bibliographystyle{plainurl}
\bibliography{refs}

\end{document}